\newtheorem{theorem}{Theorem}
\newtheorem{conjecture}[theorem]{Conjecture}
\newtheorem{problem}[theorem]{Problem}
\newenvironment{proof}[1][Proof]{\noindent\textbf{#1.} }{\ \rule{0.5em}{0.5em}}
\begin{document}

\title{On the synchronization of planar automata}
\author{J. Andres Montoya, Christian Nolasco \\
Universidad Nacional de Colombia, Bogota, Colombia.}
\maketitle

\begin{abstract}
Planar automata seems to be representative of the synchronizing behavior of
deterministic finite state automata. We conjecture that \v{C}erny's
conjecture holds true, if and only if, it holds true for planar automata. In
this paper we have gathered some evidence concerning this conjecture. This
evidence\ amounts to show that the class of planar automata is
representative of the algorithmic hardness of synchronization
\end{abstract}

This work is related to the synchronization of deterministic finite state
automata (DFAs, for short).

Let $\mathcal{M}$ be a DFA, and let $\Sigma _{\mathcal{M}}$ be its input
alphabet, we use the symbol $\Sigma _{\mathcal{M}}^{\ast }$ to denote the
set of finite strings over the alphabet $\Sigma _{\mathcal{M}}$. The
function $\widehat{\delta _{\mathcal{M}}}:\Sigma _{\mathcal{M}}^{\ast
}\times Q_{\mathcal{M}}\rightarrow Q_{\mathcal{M}}$ is defined by the
equation:%
\begin{equation*}
\widehat{\delta _{\mathcal{M}}}\left( w_{1}...w_{n},q\right) =\delta _{%
\mathcal{M}}\left( w_{n},\widehat{\delta _{\mathcal{M}}}\left(
w_{1}...w_{n-1},q\right) \right) ,
\end{equation*}%
where $\delta _{\mathcal{M}}$ is the transition function of $\mathcal{M}$.

A \textit{synchronizing string} (\textit{reset word}) for $\mathcal{M}$, is
a string $w\in \Sigma _{\mathcal{M}}^{\ast }$ such that for all $p,q\in Q_{%
\mathcal{M}},$ the equality 
\begin{equation*}
\widehat{\delta _{\mathcal{M}}}\left( w,p\right) =\widehat{\delta _{\mathcal{%
M}}}\left( w,q\right)
\end{equation*}%
holds

We say that automaton $\mathcal{M}$ is \textit{synchronizing,} if and only
if, there exists a synchronizing string for $\mathcal{M}$. Let $\mathcal{M}$
be a synchronizing automaton, its \textit{minimal reset length}, denoted by $%
rl_{\mathcal{M}}$, is the length of its minimal synchronizing strings. It is
easy to prove that $rl_{\mathcal{M}}\in O\left( \left\vert Q_{\mathcal{M}%
}\right\vert ^{3}\right) .$ \v{C}erny \cite{C} conjectured that $rl_{%
\mathcal{M}}\leq \left( \left\vert Q_{\mathcal{M}}\right\vert -1\right)
^{2}. $ This conjecture is called \textit{\v{C}erny's Conjecture}, and it is
considered the most important open problem in the combinatorial theory of
finite state automata.

\textbf{The universality conjecture for planar automata. }It is well known
that if \v{C}erny's conjecture holds true for strongly connected automata,
then it holds true for all the deterministic finite state automata.
Therefore, we say that the class of strongly connected automata is \textit{%
universal}. We conjecture that the same is true of the class of planar
automata.

Let us discuss some of the facts that led us to formulate the \textit{%
universality conjecture }for planar automata\textit{.}

We are interested in some algorithmic problems related to
DFA-synchronization. It happens that the algorithmic complexity of those
problems is well understood, and there are many deep results characterizing
their intrinsic hardness \cite{E}, \cite{GS}, \cite{OU}. It can be noticed
that all those hardness proofs work well for the planar restrictions of the
problems. Then, we have that the planar restrictions of those problems are
as hard as the unrestricted versions. It means that the class of planar
automata is an \textit{universal class} with respect to the algorithmic
hardness of synchronization.

It can be checked that all the sequences of slowly synchronizing automata
registered in the literature are sequences of planar automata (see reference 
\cite{AGV}). At his point, it is important to remark that it is fairly easy
to transform a planar sequence of slowly synchronizing automata into a
nonplanar sequence with the same synchronizing behavior, and then it follows
that there exist nonplanar sequences of slowly synchronizing automata.
However, it seems that all the sequences of slowly synchronizing automata
can be obtained this way: By locally perturbing a sequence of slowly
synchronizing planar automata. Notice that if the last assertion is true,
all the automata that could refute \v{C}erny's conjecture are \textit{%
essentially planar}.

The above two observations are the origin of the conjecture. By the way we
have found some additional evidence in favor of it. The goal of this work is
to discuss\ in more detail those old and new facts.

\textbf{Organization of the work and contributions. }This work is organized
into two sections, in section 1 we characterize the algorithmic hardness of
some synchronization problems related to planar automata and we show that
those problems are as hard as the nonplanar versions. Those results amount
to show that the class of planar automata is universal with respect to the
hardness of synchronization. We finish in section 2 with some concluding
remarks.

\bigskip

\section{On the algorithmic hardness of synchronizing planar automata}

We investigate the synchronization of finite state automata focussing on the
class of deterministic planar automata. A finite state automaton is planar,
if and only if, its transition digraph is planar. Planar automata have been
previously studied, and it is known that there are regular languages which
cannot be recognized by deterministic planar automata \cite{Book}. This last
fact indicates that the class of planar automata is not universal with
respect to the recognition power of finite automata. However, we conjecture
that this restricted class is universal with respect to the hardness of
synchronization. This conjecture motivates us to study the synchronization
of planar automata. To begin with, we study the algorithmic complexity of
some synchronization problems for planar and nonplanar automata.

\begin{problem}
($\mathbf{Synch}\left[ P\right] \mathbf{:}$ optimal synchronization of
planar automata)

\begin{itemize}
\item Input: $\left( \mathcal{M},k\right) $, where $\mathcal{M}$ is a
synchronizing planar automaton and $k$ is a positive integer.

\item Problem: Decide if there exists a synchronizing string for $\mathcal{M}
$ whose length is upperbounded by $k.$
\end{itemize}
\end{problem}

\begin{theorem}
The problem $\mathbf{Synch}\left[ P\right] $ is NP complete.
\end{theorem}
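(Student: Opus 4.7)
The plan is to prove the two directions of NP-completeness separately, with membership being routine and hardness the substantive half.

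For membership in NP, I would use the cubic upper bound $rl_{\mathcal{M}}\in O(|Q_{\mathcal{M}}|^{3})$ mentioned in the introduction. If $k\geq c\,|Q_{\mathcal{M}}|^{3}$ for the absolute constant implicit in that bound, the instance is automatically a ``yes'' instance (because $\mathcal{M}$ is promised to be synchronizing). Otherwise, a candidate reset word has length polynomial in $|Q_{\mathcal{M}}|$, so I would nondeterministically guess it symbol by symbol and verify in polynomial time by simulating $\widehat{\delta _{\mathcal{M}}}$ on each state of $Q_{\mathcal{M}}$ and checking that all trajectories collapse to a common state.

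For NP-hardness, the plan is to reduce from a known NP-hard synchronization problem (the analogous problem without the planarity restriction, as established in \cite{E}) or, alternatively, from a planar SAT variant such as Planar 3-SAT. The first approach is guided by the paper's introductory remark that ``all those hardness proofs work well for the planar restrictions of the problems'': I would inspect the standard reduction from 3-SAT to the short-reset-word problem, track the transition digraph it produces, and argue that it admits a planar embedding (or can be modified to do so without altering the set of reset words or their lengths). The reduction is polynomial because it only adds gadgets of constant size per variable and clause.

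I expect the main technical obstacle to be the planar realization of the transition digraph when a single variable occurs in many clauses, since the natural drawing of the reduction tends to produce unavoidable crossings between transitions that encode different literal occurrences. The key subtask is therefore either to design a planar \emph{crossover gadget} that lets two information-carrying edges pass through one another without introducing unintended synchronization shortcuts (which would collapse the reduction), or — more elegantly — to start from Planar 3-SAT and exploit the given planar embedding of the formula to place the variable and clause gadgets in the plane with a cyclic ordering of incidences that is automatically realizable without crossings. Once this planar layout is secured, verifying that a short reset word exists if and only if the SAT instance is satisfiable should proceed exactly as in the classical argument.
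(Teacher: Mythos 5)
Your overall strategy coincides with the paper's: membership in NP via the cubic bound on reset length together with a guess-and-verify argument, and NP-hardness by reusing Eppstein's SAT reduction \cite{E} and arguing that its output automata are planar. The membership half is fine. The gap is in the hardness half: relative to Eppstein's known result, the \emph{entire} content of this theorem is the planarity of the reduction's output, and your proposal does not establish it --- it explicitly defers it as ``the key subtask'' and offers two candidate strategies, neither of which is executed. The paper's proof consists precisely of the observation that Eppstein's construction $\left( \mathcal{M}_{\alpha },k_{\alpha }\right) $ is already planar for every CNF $\alpha $: all the gadgets admit a crossing-free drawing as given, so no crossover machinery and no switch to Planar 3-SAT is needed. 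Until you actually inspect those gadgets and exhibit (or argue for) a planar embedding, you have a plan rather than a proof.

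Your two fallback plans also deserve a caution, because each is substantially harder than the route the paper takes. A crossover gadget for synchronization reductions is genuinely delicate: replacing a crossing by a planar widget changes path lengths and can create synchronization shortcuts, which is exactly why the paper, when it really does need to planarize an arbitrary automaton (in the proof of Theorem \ref{teo}), introduces an enlarged alphabet, subdivided edges and a clock gadget, at the cost of a quadratic blow-up in size; importing that machinery here would require re-verifying the tight gap between the satisfiable case (reset length at most $k_{\alpha }$) and the unsatisfiable case (reset length exactly $k_{\alpha }+1$) after the transformation. Likewise, restarting from Planar 3-SAT means building and analyzing a new construction from scratch rather than inheriting the gap properties from \cite{E}. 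The short and intended argument is simply to check that Eppstein's automata are planar as they stand.
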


\begin{proof}
Eppstein \cite{E} proved there exists a ptime algorithm, which, on input $%
\alpha $ (where $\alpha $ is a CNF) computes a pair $\left( \mathcal{M}%
_{\alpha },k_{\alpha }\right) $, such that $\mathcal{M}_{\alpha }$ is a
synchronizing automaton satisfying the following two conditions:

\begin{enumerate}
\item If $\alpha $ is satisfiable there exists a reset word for $\mathcal{M}%
_{\alpha }$, whose length is upperbounded by $k_{\alpha }.$

\item If $\alpha $ is not satisfiable the length of the minimal reset words
for $\mathcal{M}_{\alpha }$ is equal to $k_{\alpha }+1.$
\end{enumerate}

One can easily check that for all $\alpha ,$ the automaton $\mathcal{M}%
_{\alpha }$ is planar. Thus, we have that set of outputs of Eppstein's
reduction is included in the class of planar synchronizing automata, and it
implies that SAT is ptime reducible to $Synch\left[ P\right] $. Thus, we
have that $Synch\left[ P\right] $ is NP hard. It is easy to check that $Synch%
\left[ P\right] $ belongs to NP
\end{proof}

It was fairly easy to prove that $Synch\left[ P\right] $ is NP complete, we
just noticed that all the gadgets used in Eppstein's proof are planar. A
similar fact will happen more than once in this work: We get the hardness
result for planar automata by noticing that the proof for general automata
works verbatim in the planar framework.

How hard is the problem of approximating the minimal reset length of an
automaton? First, we observe that Eppstein's greedy algorithm \cite{E} is a
ptime approximation algorithm of ratio $O\left( n\right) $, and it is clear
that it must work as well when one restricts its execution to planar
synchronizing automata. It is natural to ask: Which is the best
approximation ratio that can be achieved in polynomial time?

\begin{theorem}
Given $\varepsilon >0,$ it is NP hard to approximate the minimal reset
length of planar automata within the ratio $O\left( n^{1-\varepsilon
}\right) .$
\end{theorem}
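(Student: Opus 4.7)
The plan is to follow the same strategy that worked for Theorem 2: take the strongest known inapproximability result for the minimal reset length of general synchronizing automata---namely, the $O(n^{1-\varepsilon})$ hardness of Gawrychowski and Straszak \cite{GS}---and verify that its reduction produces outputs that already lie in the class of planar synchronizing automata (or can be made planar with at most a polynomial blow-up that is absorbed into $\varepsilon$). Concretely, I would start from a gap version of SAT, and for an input formula $\alpha $ exhibit a ptime reduction producing a pair $(\mathcal{M}_{\alpha},L_{\alpha})$ with $\mathcal{M}_{\alpha}$ planar and such that $rl_{\mathcal{M}_{\alpha}}\leq L_{\alpha}$ when $\alpha $ is satisfiable, while $rl_{\mathcal{M}_{\alpha}} \geq L_{\alpha}\cdot n^{1-\varepsilon}$ otherwise, where $n=|Q_{\mathcal{M}_{\alpha}}|$.

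The first step would be to inspect the gadgets of the Gawrychowski--Straszak construction: these are essentially clause gadgets, variable gadgets, and a "blow-up" component (typically a long cycle or a chain of copies) whose sole purpose is to amplify the gap between the YES and NO cases. Each of these sub-automata is built from simple cycles and paths over a constant alphabet, and can be drawn in the plane by placing the gadgets along a horizontal axis, with letters acting either inside a gadget or along parallel "tracks" between consecutive gadgets, exactly in the spirit of Eppstein's construction used in Theorem 2.

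The main obstacle is the interaction between distant gadgets. Any letter whose action crosses over several gadgets threatens planarity. I would handle this by one of two routes. Option (a): show directly that the offending transitions can be rerouted around the gadgets, using additional "bus" states that feed a single track along the outer face of the embedding. Option (b): introduce a planarization step that replaces each crossing by a small subautomaton whose effect on any reset word is an additive constant; since the number of crossings is polynomial in $n$ and the reset length we care about scales linearly or worse with $n$, this preserves the $n^{1-\varepsilon}$ gap after re-absorbing the blow-up into $\varepsilon$.

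Once planarity is secured without losing the gap, the conclusion is immediate: any polynomial-time algorithm that approximates $rl_{\mathcal{M}}$ within ratio $O(n^{1-\varepsilon})$ on planar inputs would distinguish the YES and NO cases of the gap instance and thus decide SAT in polynomial time. I expect the delicate point to be the planarization estimate in option (b): one must be careful that the number of crossings introduced, and the local cost per crossing, remain polynomially bounded in a way that does not erase the multiplicative gap, so choosing $\varepsilon'$ slightly smaller than $\varepsilon $ in the final bound will be the natural way to make the bookkeeping close.
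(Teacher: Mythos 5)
Your primary route is exactly the paper's: the proof given there is a two-line observation that the Gawrychowski--Straszak reduction can be checked to work verbatim for planar automata, i.e., your option (a) of inspecting the gadgets and verifying (or locally rerouting to obtain) a planar embedding. On that branch you and the paper coincide, and neither of you actually carries out the gadget-by-gadget verification.

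I do want to flag a genuine problem with your fallback option (b). Replacing each crossing by a small subautomaton is not a benign operation here: to keep the automaton deterministic a crossing gadget generally forces a relabelling of the alphabet and a subdivision of edges, and the paper's own crossing-elimination construction (in the proof of the $NWL$-completeness theorem) pays for this with a clock and a \emph{quadratic} blow-up in the number of states. That blow-up is fatal for this particular statement: if the planar instance has $N=n^{c}$ states with $c>1$ and the YES/NO reset lengths are both rescaled by the same factor, a gap of $n^{1-\varepsilon}$ becomes $N^{(1-\varepsilon)/c}$, so you can only conclude hardness up to ratio $N^{1/c}$, not $N^{1-\varepsilon'}$ for every $\varepsilon'>0$. ``Re-absorbing the blow-up into $\varepsilon$'' works for a single fixed ratio but not for a statement quantified over all $\varepsilon>0$, which is why the paper itself, in its concluding remarks, points out that its quadratic planarization cannot be used to transfer superquadratic lower bounds. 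So if the Gawrychowski--Straszak gadgets turned out not to be planar as built, your option (b) as stated would not rescue the theorem; you would need a planarization with only $n^{1+o(1)}$ states, or the direct verification of option (a).
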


\begin{proof}
Gawrychowski and Straszak \cite{GS} proved that for all $\varepsilon >0,$ it
is NP hard to approximate the minimal reset length of general synchronizing
automata within the ratio $O\left( n^{1-\varepsilon }\right) .$ Once again,
it is enough to check that the proof of Gawrychowski and Straszak works for
planar automata.
\end{proof}

Thus, we have that the best approximation ratio that can be achieved in
polynomial time is the ratio $O\left( n\right) ,$ which is achieved by
Eppstein algorithm. Moreover, the claim is true for planar and general
synchronizing automata.

It is worth to remark that the computation of minimal reset lengths is not a
typical NP computation, given that minimality corresponds to an universal
assertion instead of an existential assertion. Then, it cannot be said that
the NP completeness of $synch$ characterizes the intrinsic hardness of
computing minimal reset lengths and minimal synchronizing strings. This
observation motivates the study of a second algorithmic problem, denoted by $%
ESynch\left[ P\right] $ and defined by:

\begin{problem}
($ESynch\left[ P\right] :$ Deciding minimal reset length)

\begin{itemize}
\item \textit{Input: }$\left( \mathcal{M},k\right) $\textit{, where }$%
\mathcal{M}$\textit{\ is a synchronizing planar automaton and }$k$\textit{\
is a positive integer.}

\item \textit{Problem: Decide if the minimal reset length of }$\mathcal{M}$%
\textit{\ is equal to }$k.$
\end{itemize}
\end{problem}

Let DP be the closure under finite intersections of the class NP$\cup $%
co-NP, we prove that $ESynch\left[ P\right] $ is complete for DP. Olschewski
and Ummels proved that $ESynch\left[ P\right] $ is complete for DP (see
reference \cite{OU}). Our result is, once again, an easy consequence of the
nonplanar result (with its proof), but this time we have to work a little
bit.

\begin{theorem}
$ESynch\left[ P\right] $ is complete for DP.
\end{theorem}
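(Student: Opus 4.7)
Membership in DP is immediate: $(\mathcal{M},k)\in ESynch[P]$ if and only if $(\mathcal{M},k)\in Synch[P]$ and $(\mathcal{M},k-1)\notin Synch[P]$; since $Synch[P]\in\mathrm{NP}$ has already been shown, the first condition is in NP and the second is in coNP, so $ESynch[P]\in\mathrm{DP}$.

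The plan for DP-hardness is to follow the Olschewski--Ummels argument for the non-planar version of $ESynch$ and to verify that every gadget can be realized planarly, this being the ``little bit of work'' alluded to after the statement. I would reduce from SAT-UNSAT, the canonical DP-complete language of pairs $(\alpha,\beta)$ with $\alpha\in\mathrm{SAT}$ and $\beta\notin\mathrm{SAT}$. Apply Eppstein's reduction, which is planar as observed in the proof of the NP-completeness of $Synch[P]$, to $\alpha$ and to $\beta$ separately, obtaining planar synchronizing automata $\mathcal{M}_{\alpha}$ and $\mathcal{M}_{\beta}$ with the gap property. Combine them into a single automaton $\mathcal{M}_{\alpha,\beta}$ by parallel disjoint-alphabet composition: take the disjoint union of state sets over the alphabet $\Sigma_{\mathcal{M}_{\alpha}}\sqcup\Sigma_{\mathcal{M}_{\beta}}$, declaring each letter of one alphabet to act as the identity (via self-loops) on the state set of the other component. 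This composition is planar, since the two components can be embedded in disjoint regions of the plane and self-loops do not affect planarity, and its minimal reset length equals $rl(\mathcal{M}_{\alpha})+rl(\mathcal{M}_{\beta})$, because letters of one alphabet are inert on the other component.

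The main obstacle is symmetry-breaking. The plain parallel composition yields reset length $k_{\alpha}+k_{\beta}+1$ in both the $(\mathrm{SAT},\mathrm{UNSAT})$ and $(\mathrm{UNSAT},\mathrm{SAT})$ cases, so no single target $k$ isolates SAT-UNSAT. To resolve this, the plan is first to sharpen Eppstein's bound to the exact equality $rl(\mathcal{M}_{\alpha})=k_{\alpha}$ when $\alpha\in\mathrm{SAT}$ (this is essentially implicit in the construction), and then to splice into $\mathcal{M}_{\beta}$ a small planar cost gadget that widens the Eppstein gap from $1$ to $2$, so that the modified $\mathcal{M}_{\beta}'$ has $rl=k_{\beta}$ when $\beta\in\mathrm{SAT}$ and $rl=k_{\beta}+2$ otherwise. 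The four cases then produce pairwise distinct reset lengths $k_{\alpha}+k_{\beta}$, $k_{\alpha}+k_{\beta}+1$, $k_{\alpha}+k_{\beta}+2$, $k_{\alpha}+k_{\beta}+3$, and the target $k:=k_{\alpha}+k_{\beta}+2$ is attained exactly in the SAT-UNSAT case. I expect the technical heart of the argument to be the design of this asymmetric cost gadget, ensuring that its insertion preserves both the correctness of the Eppstein reduction and the planarity of the transition digraph; a natural strategy is to attach it along an outer face of a planar embedding of $\mathcal{M}_{\beta}$ so that only boundary vertices are reused.
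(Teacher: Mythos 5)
Your membership argument is fine and matches the standard one. The hardness part, however, has a fatal structural flaw before one even reaches the gadget design you defer: the ``parallel disjoint-alphabet composition'' you propose takes the disjoint union of the two state sets with no transitions between the components (letters of one alphabet act as self-loops on the other component). Such an automaton is never synchronizing: a state of the $\alpha$-component and a state of the $\beta$-component can never be mapped to a common state, since every letter preserves both components. So $\mathcal{M}_{\alpha,\beta}$ is not a legal instance of $ESynch[P]$ (the problem requires a synchronizing automaton), and the claimed identity $rl(\mathcal{M}_{\alpha,\beta})=rl(\mathcal{M}_{\alpha})+rl(\mathcal{M}_{\beta})$ is vacuous. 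Any repair forces you to add transitions that merge the two components into a common sink, and once you do that you are, in effect, rebuilding the Olschewski--Ummels automaton $\mathcal{M}_{\alpha\beta}$, which already has a common state $s$ and, crucially, already solves your symmetry-breaking problem by built-in asymmetric delays (the states $t_1,t_2$), yielding reset lengths $k+2$, $k+3$, $k+4$ in the three relevant cases.

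The paper's route is therefore different and shorter: it takes the Olschewski--Ummels reduction as a black box with its three-way case distinction, observes that the only obstruction to planarity comes from the two delay states $t_1$ and $t_2$ (which receive edges from too many places), and replaces them by chains of fresh states $p_{i,k+1},p_{i,k+2}$ and $q_{i,k+1}$ feeding into $s$, checking that the three reset-length conditions are preserved. Your remaining ingredients --- sharpening Eppstein's reduction to an exact equality $rl(\mathcal{M}_{\alpha})=k_{\alpha}$ in the satisfiable case, and an ``asymmetric cost gadget'' widening the gap from $1$ to $2$ while preserving planarity and correctness --- are exactly the hard part of your plan and are asserted rather than constructed; neither is obviously available, and neither is needed on the paper's route. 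As it stands the proposal does not constitute a proof.
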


As remarked before Olschewski and Ummels proved that $ESynch$ is complete
for DP$.$ To this end, they exhibited a ptime reduction of the problem
SAT-UNSAT in the problem $ESynch.$ Recall that SAT-UNSAT is the problem
defined by:

\begin{proof}

\bigskip

\begin{itemize}
\item \textit{Input: }$\left( \alpha ,\beta \right) $\textit{, where }$%
\alpha $\textit{\ and }$\beta $\textit{\ are boolean formulas in conjunctive
normal form.}

\item \textit{Problem: Decide if }$\alpha $\textit{\ is satisfiable and }$%
\beta $\textit{\ is unsatisfiable.}
\end{itemize}
\bigskip
\begin{figure}[h]
\begin{center}
\includegraphics{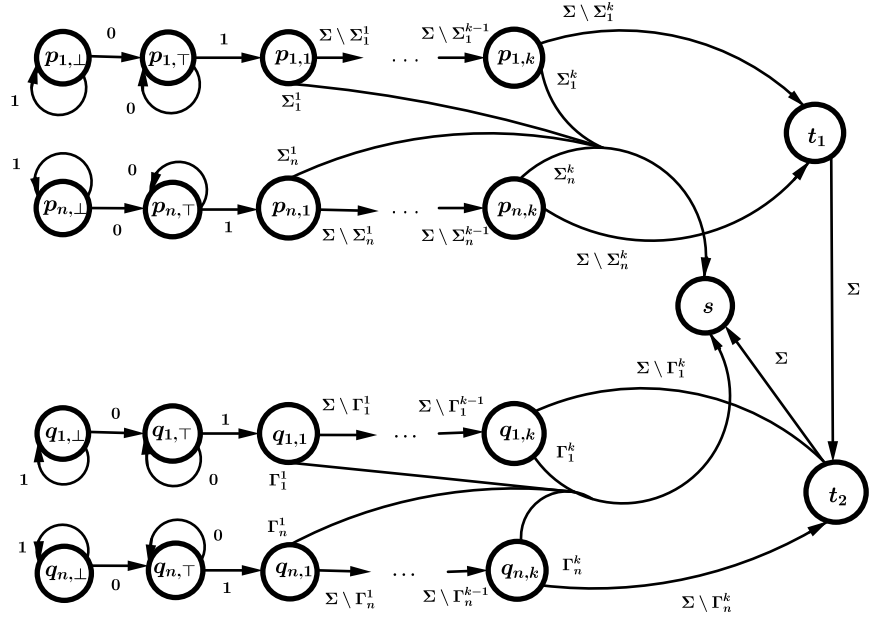}

\end{center}
\caption{Olschewski-Ummels construction}
\label{figura1:Figure 1}
\end{figure}
It is known that the later problem is DP complete, and hence the reduction
of Olschewski and Ummels suffices. Given $\alpha \left(
X_{1},...,X_{k}\right) $ and $\beta \left( X_{1},...,X_{k}\right) $, the
aforementioned reduction allows one to compute a pair $\left( \mathcal{M}%
_{\alpha \beta },k+3\right) $ such that:

\begin{itemize}
\item If $\alpha $ and $\beta $ are satisfiable, then the minimal reset
length of $\mathcal{M}_{\alpha \beta }$ is equal to $k+2.$

\item If $\alpha $ is satisfiable and $\beta $ is unsatisfiable, then the
minimal reset length of $\mathcal{M}$ is equal to $k+3.$

\item If $\alpha $ is unsatisfiable, then the minimal reset length of $%
\mathcal{M}_{\alpha \beta }$ is equal to $k+4.$
\end{itemize}

The automaton $\mathcal{M}_{\alpha \beta }$ is defined as follows:

We suppose, without loss of generality, that $\alpha $ and $\beta $ have the
same number $n$ of clauses, and no propositional variable occurs in both $%
\alpha $ and $\beta $. Let $\alpha =C_{1}\wedge \cdots \wedge C_{n}$ and $%
\beta =D_{1}\wedge \cdots \wedge D_{n}.$ The automaton $\mathcal{M}_{\alpha
\beta }$ consists of the states $s,t_{1},t_{2},p_{ij}$ and $q_{ij}$, where $%
i\in \left\{ 1,...,n\right\} $ and $j\in \left\{ \top ,\bot ,1,...,k\right\}
;$ the transitions are depicted in figure 1. An edge from $p$ to $q$
labelled with $\Delta \subseteq \Sigma $ has the meaning that $\delta \left(
p,a\right) =q$ for all $a\in \Delta .$ The sets $\Sigma _{i}^{j}$ are
defined by $0\in \Sigma _{i}^{j}\Leftrightarrow \lnot X_{j}\in C_{i}$ and $%
1\in \Sigma _{i}^{j}\Leftrightarrow X_{j}\in C_{i},$ and the sets $\Gamma
_{i}^{j}$ are defined by $0\in \Gamma _{i}^{j}\Leftrightarrow \lnot X_{j}\in
D_{i}$ and $1\in \Gamma _{i}^{j}\Leftrightarrow X_{j}\in D_{i}.$


It can be observed that $\mathcal{M}_{\alpha \beta }$ is not always a planar
automaton. However, such a construction can be slightly modified in order to
obtain a planar automaton $\mathcal{M}_{\alpha \beta }^{\ast }$ satisfying
the above three conditions. A possible modification consists in eliminating
the nodes $t_{1}$ and $t_{2}$ and replacing them with the set of nodes%
\begin{equation*}
\left\{ p_{i,,j}:i\leq n,\text{ }j=k+1,k+2\right\} \cup \left\{
q_{i,,j}:i\leq n\text{, }j=k+1\right\} .
\end{equation*}

Moreover, for each $a\in \Sigma $ we set%
\begin{equation*}
\delta \left( q,a\right) =\left\{ 
\begin{array}{c}
p_{i,,j+1}\text{ if }q=p_{i,,j}\text{ and }j=k,k+1 \\ 
s\text{, if }q=p_{i,k+2} \\ 
q_{i,,j+1}\text{ if }q=q_{i,,j}\text{ and }j=k \\ 
s\text{, if }q=q_{i,k+1}%
\end{array}%
\right.
\end{equation*}

It is easy to check that the automaton so defined, which we denote with
symbol $\mathcal{M}_{\alpha \beta }^{\ast }$, is a planar automaton
satisfying the same three conditions satisfied by $\mathcal{M}_{\alpha \beta
}$. Thus, we have a ptime reduction of SAT-UNSAT in the problem $ESynch\left[
P\right] ,$ and hence we can conclude that the later problem is complete for
DP.
\end{proof}

Olschewski and Ummels also proved that the problem of computing the minimal
reset length of a given synchronizing automaton is complete for the class $%
FP^{NP\left[ \log \left( n\right) \right] }$ \cite{OU}. It can be checked
that their proof can be used (verbatim) to show that computing the minimal
reset length of planar automata is complete for the same class of functions.

\subsection{Synchronizing small sets of states}

In this section we add a further hardness result to the above list. This
time we have to work hard, because we have to prove first the corresponding
nonplanar result, which characterizes the parameterized complexity of 
\textit{subset synchronization. }We refer the reader to \cite{FG} for a
pedagogical introduction to the basics of parameterized complexity.

Given an automaton $\mathcal{M}$, and given $q_{1},...,q_{k}\in Q_{\mathcal{M%
}}$, a synchronizing string for those $k$ states, is a string $w,$ such that
for all $i,j\leq k$, the equality 
\begin{equation*}
\widehat{\delta _{\mathcal{M}}}\left( w,q_{i}\right) =\widehat{\delta _{%
\mathcal{M}}}\left( w,q_{j}\right)
\end{equation*}%
holds. In the later case we say that $w$ synchronizes the subset $\left\{
q_{1},...,q_{k}\right\} .$

We think that subset synchronization is a powerful concept that allows one
to model some interesting discrete dynamics. Suppose, for instance, that one
has a troop of agents scattered over a territory, and that he wants to
broadcast an instruction, the same one for all the agents, which must lead
the agents to a common site on the territory. If the territory is the
transition digraph of a synchronizing automaton, and he does not know the
initial locations of\ the agents, then he must broadcast a reset word for
the underlying automaton. On the other hand, if he knows the initial
locations $q_{1},...,q_{k}$, then he must broadcast a synchronizing string
for these $k$ states.

A second dynamics refers a class of games on checkerboards. Suppose one has
a set of tokens scattered over checkerboard. Each time he chooses an
available action, it determines the way in which he must move each one of
his tokens. Tokens that arrive to the same site are stacked. The goal is to
gather all the tokens into a single stack using as few actions as possible.
Notice that if the transition digraph defined by the checkerboard together
with the set of allowed actions defines the transition digraph of an
automaton, say $\mathcal{M}$, then the goal of this game corresponds to
synchronize the states of $\mathcal{M}$ that were occupied by tokens at time 
$0.$

Notice that, in the above two situations, one is interested in computing
minimal synchronizing strings for the set of occupied states. We consider
that planar automata are natural scenarios for the most representative
instances of those two problems: On one hand, it can be argued that planar
digraphs, which are the discrete versions of the surfaces of genus $0,$ are
natural discrete models of territories. And, on the other hand, most
checkerboards are planar.

First, we consider the case where the number of states to be synchronized is
upperbounded by a fixed constant. We use the symbol $k$\textbf{-}$Synch$ to
denote the problem.

\begin{problem}
($k$-$synch:$ Synchronization of $k$-tuples$)$

\begin{itemize}
\item \textit{Input: }$\left( \mathcal{M},\left\{ q_{1},...,q_{r}\right\}
\right) ,$ \textit{where\ }$\mathcal{M}$ \textit{is a synchronizing
automaton, }$q_{1},...,q_{r}\in Q_{\mathcal{M}}$ \textit{and} $r\leq k.$

\item \textit{Problem: Compute a synchronizing string of minimal length for
the states }$q_{1},...,q_{r}.$
\end{itemize}
\end{problem}

Given $k\geq 2,$ it is easy to show that the problem $k$-$Synch$ can be
solved in polynomial time.

\begin{theorem}
$k$-$Synch$ can be solved in time $O\left( \left\vert \mathcal{M}\right\vert
^{2k}\right) .$
\end{theorem}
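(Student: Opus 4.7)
The plan is to do breadth-first search on the power automaton, but restricted to the states that can actually be reached from the initial tuple. Since deterministic transitions can only merge states (never split them), the image of a subset of size at most $k$ under any letter is again a subset of size at most $k$. So the reachable portion of the power automaton is contained in the induced subgraph on subsets of $Q_{\mathcal{M}}$ of cardinality at most $k$.

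First I would build this restricted subset graph $G$ explicitly. Its vertex set $V$ is $\{S\subseteq Q_{\mathcal{M}}: 1\le |S|\le k\}$, so $|V|=\sum_{i=1}^{k}\binom{n}{i}=O(n^{k})$ where $n=|Q_{\mathcal{M}}|$. For each $S\in V$ and each letter $a\in\Sigma_{\mathcal{M}}$, I add a directed edge from $S$ to $\widehat{\delta_{\mathcal{M}}}(a,S):=\{\delta_{\mathcal{M}}(a,q):q\in S\}$, which is computed in time $O(k)$. The graph has $O(n^{k}\cdot|\Sigma_{\mathcal{M}}|\cdot k)$ edges and can be built within the claimed time bound, which generously absorbs the $|\Sigma_{\mathcal{M}}|$ factor into $|\mathcal{M}|^{2k}$.

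Next I would run BFS on $G$ starting from the vertex $S_{0}=\{q_{1},\ldots,q_{r}\}$, recording a predecessor-letter for every newly discovered vertex. A word $w$ synchronizes $S_{0}$ if and only if $\widehat{\delta_{\mathcal{M}}}(w,S_{0})$ is a singleton, so the first singleton encountered during BFS lies at the minimum possible distance, and reading off the letters along the BFS tree path from $S_{0}$ to that singleton yields a minimum-length synchronizing string for $q_{1},\ldots,q_{r}$. BFS runs in time linear in the size of $G$, giving the overall bound of $O(|\mathcal{M}|^{2k})$ stated in the theorem.

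There is essentially no obstacle: the only subtleties are (i) verifying that a singleton is indeed reachable in $G$ from $S_{0}$, which follows from $\mathcal{M}$ being synchronizing (any reset word for $\mathcal{M}$ is in particular a synchronizing string for $S_{0}$), and (ii) confirming that subsets of size larger than $k$ never need to be explored, which is immediate because transitions are deterministic functions and can only shrink the image. Both points are standard, so the argument is really just BFS on a polynomially sized (for fixed $k$) search space.
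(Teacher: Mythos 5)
Your proposal is correct and takes essentially the same route as the paper: the paper also forms the $k$-power automaton whose states are the subsets of $Q_{\mathcal{M}}$ of size at most $k$ and computes a shortest path from $\left\{ q_{1},...,q_{r}\right\}$ to the set of singletons, which is exactly your BFS on the restricted subset graph. The only difference is presentational (you spell out the edge count and the merge-only observation), so there is nothing to add.
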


\begin{proof}
Let $\left( \mathcal{M},\left\{ q_{1},...,q_{r}\right\} \right) $ be an
instance of $k$-$Synch,$ consider the $k$\textit{-power automaton} $\mathcal{%
M}^{k}$ defined by:

\begin{itemize}
\item $Q_{\mathcal{M}^{k}}=\left\{ \left\{ p_{1},...,p_{k}\right\}
:p_{1},...,p_{k}\in Q_{\mathcal{M}}\right\} $ (the $p$'$s$ are not
necessarily pairwise different).

\item Given $c\in \Sigma _{\mathcal{M}}$, the equality%
\begin{equation*}
\delta _{\mathcal{M}^{k}}\left( c,\left\{ p_{1},...,p_{k}\right\} \right)
=\left\{ \delta _{\mathcal{M}}\left( c,p_{1}\right) ,...,\delta _{\mathcal{M}%
}\left( c,p_{k}\right) \right\}
\end{equation*}%
holds$.$
\end{itemize}

Computing a minimal synchronizing string for states $q_{1},...,q_{r},$ is
the same as computing a minimal path in $\mathcal{M}^{k},$ connecting the
state $\left\{ q_{1},...,q_{r}\right\} $ with the set $\Delta _{\mathcal{M}%
^{k}}=\left\{ A\in Q_{\mathcal{M}^{k}}:\left\vert A\right\vert =1\right\} $.
The later problem can be solved in time $O\left( \left\vert \mathcal{M}%
\right\vert ^{2k}\right) .$
\end{proof}

Notice that, when estimating the running time of the above algorithm, the
parameter $k$ occurs in the exponent. What does it happens if the parameter $%
k$ is not fixed? Let $p$-$Synch\left[ P\right] $ be the parameterized
problem defined by:

\begin{problem}
($p$-\thinspace $synch\left[ P\right] :$ Parameterized synchronization of
planar automata)

\begin{itemize}
\item \textit{Input: }$\left( \mathcal{M},\left\{ q_{1},...,q_{k}\right\}
,k,r\right) ,$ \textit{where\ }$\mathcal{M}$ \textit{is a synchronizing
planar automaton and} \textit{\ }$q_{1},...,q_{k}\in Q_{\mathcal{M}}$\textit{%
.}

\item \textit{Parameter:} $k.$

\item \textit{Problem: Decide if there exists a synchronizing string of
length }$r$ \textit{for the states }$q_{1},...,q_{k}.$
\end{itemize}
\end{problem}

Recall that a parameterized problem is fix parameter tractable, if and only
if, it can be solved in time $O\left( f\left( k\right) \cdot n^{c}\right) $,
for some function $f$ and some constant $c$ (see reference \cite{FG}). Is $p$%
-$Synch\left[ P\right] $ \textit{fix parameter tractable}? We prove that the
problem $p$-$Synch\left[ P\right] $ is $NWL$ complete. The class $NWL$ is
supposed to be the parameterized analogue of\ PSPACE \cite{G}. This class is
located above of the W-hierarchy, and hence our result implies that $p$-$%
Synch\left[ P\right] $ is W$\left[ t\right] $ hard for all $t\geq 1$. \ The
class $NWL$ is defined as the closure under \textit{fpt-reductions\ }(see
reference \cite{FG}) of the following problem.

\begin{problem}
($p$-$NWL:$ deciding acceptance of parameterized space bounded computations)

\begin{itemize}
\item \textit{Input: }$\left( \mathcal{M},t,k\right) $, \textit{where $%
\mathcal{M}$ is a nondeterministic Turing machine, }$t$\textit{\ is a
positive integer given in unary, and} $k\geq 1.$

\item \textit{Parameter:} $k.$

\item \textit{Problem: Decides if} $\mathcal{M}$ \textit{accepts the empty
input in at most} $t$ \textit{steps and checking at most} $k$ \textit{cells}.
\end{itemize}
\end{problem}

Given a parameterized problem $L$, if one wants to check that $L$ belongs to 
$NWL$, it is enough to exhibit a nondeterministic RAM accepting the language 
$L,$ and such that the number of registers it uses along the computation, on
input $X,$ is bounded above by a quantity that only depends on the parameter
of $X$ (see \cite{G}). We prove that $p$-$Synch\left[ P\right] $ is $NWL$
hard by exhibiting an fpt Turing reduction of \textit{The parameterized
longest common subsequence problem} in $p$-$Synch\left[ P\right] .$ The%
\textit{\ }parameterized longest common subsequence problem, denoted by $p$-$%
LCS$, is the parameterized problem defined by:

\begin{problem}
($p$-$LCS:$ parameterized longest common subsequence)
\end{problem}

\begin{itemize}
\item \textit{Input: }$\left( \left\{ w_{1},...,w_{k}\right\} ,\Sigma
,m\right) ,$ \textit{where\ }$\Sigma $ \textit{is a finite alphabet}, $%
w_{1},...,w_{k}\in \Sigma ^{\ast }$\textit{\ and }$m$ \textit{is a positive
integer.}

\item \textit{Parameter:} $k$

\item \textit{Problem: Decide if there exists a string }$w\in \Sigma ^{\ast
} $, \textit{such that for all} $i\leq k$ \textit{string} $w$ \textit{is a
substring of }$w_{i}$, \textit{and such that }$\left\vert w\right\vert =m.$
\end{itemize}

Guillemot \cite{G} proved that $p$-$LCS$ is hard for $NWL$.

\begin{theorem}
\label{teo}The problems $p$-$Synch\left[ P\right] $ and $p$-$Synch$ are $NWL$
complete.
\end{theorem}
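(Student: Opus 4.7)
The theorem has two halves; I would dispatch the easy one first. By the RAM characterization of $NWL$ recalled just above the statement, it suffices to give a nondeterministic RAM for $p$-$Synch$ whose register usage depends only on the parameter $k$: store the current images of $q_1, \ldots, q_k$ under the guessed prefix in registers $R_1, \ldots, R_k$, plus $O(1)$ scratch registers for a step counter bounded by $r$ and for reading $\delta_\mathcal{M}$. At each step nondeterministically guess a letter $a \in \Sigma_\mathcal{M}$, update each $R_i$ to $\delta_\mathcal{M}(a, R_i)$, and accept once $R_1 = \cdots = R_k$. The register count is $k + O(1)$, placing $p$-$Synch$, and \emph{a fortiori} $p$-$Synch[P]$, in $NWL$.

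\textbf{Lower bound.} For hardness of $p$-$Synch[P]$, I would exhibit an fpt many-one (hence Turing) reduction from $p$-$LCS$. Given $(\{w_1, \ldots, w_k\}, \Sigma, m)$, build an automaton $\mathcal{M}$ over the alphabet $\Sigma \cup \{\sigma_1, \ldots, \sigma_k, \sharp\}$ consisting of a path gadget $p_{i, 0} \to \cdots \to p_{i, |w_i|}$ for each $i$, a counter chain $c_0 \to \cdots \to c_m$, and two absorbing states $s, f$. At $p_{i, j}$ with $j < |w_i|$, a letter $a \in \Sigma$ advances to $p_{i, j+1}$ exactly when $a = w_i[j+1]$ and otherwise fixes $p_{i, j}$; $\sigma_i$ advances; every other letter fixes $p_{i, j}$. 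At $p_{i, |w_i|}$ every letter fixes the state except $\sharp$, which goes to $s$. At $c_l$ with $l < m$, every $\Sigma$-letter advances, $\sharp$ goes to $f$, and the $\sigma_j$'s fix $c_l$; at $c_m$ every $\Sigma$-letter goes to $f$, $\sharp$ goes to $s$, and the $\sigma_j$'s fix $c_m$. Take start states $q_0 = c_0$ and $q_i = p_{i, 0}$, and set $r := \sum_i |w_i| + 1 - (k-1)m$. No path-gadget transition leads to $f$, so a synchronizing word $u$ must drive every token to $s$; consequently it contains exactly $m$ letters of $\Sigma$ (else the counter stalls below $c_m$ or overflows into $f$), exactly $|w_i| - m_i$ copies of $\sigma_i$ (where $m_i$ is the number of $\Sigma$-letters matched by token $i$), and terminates with $\sharp$. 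Writing $v$ for the $\Sigma$-subword of $u$, we get $|u| = m + \sum_i (|w_i| - m_i) + 1$ with $\sum_i m_i \leq \sum_i \mathrm{LCS}(v, w_i) \leq km$, and equality iff $v$ is a length-$m$ common subsequence of $w_1, \ldots, w_k$. The minimum sync length therefore equals $r$ iff an $m$-common subsequence exists, giving the required many-one reduction with new parameter $k+1$.

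\textbf{Planarity and the main obstacle.} The core technical step is to confirm that $\mathcal{M}$ is planar. Each path chain is a linear path whose only off-chain edges are local self-loops and the single edge $p_{i, |w_i|} \to s$, so the $k$ path chains form a ``spider'' around the common vertex $s$. The counter chain, together with the $\sharp$-edges $c_l \to f$ for $l < m$, the edge $c_m \to f$ (on $\Sigma$-letters) and the edge $c_m \to s$ (on $\sharp$), is a small planar gadget whose edges to $f$ can be drawn as nested non-crossing arcs. Placing the counter-and-$f$ subgraph on one side of $s$ and the path-chain spider on the other yields a planar embedding of the whole of $\mathcal{M}$. The chief obstacle throughout the proof is the choice of gadget: a naive ``greedy-skip'' subsequence checker introduces long forward arcs that easily cross each other in any layout, and the key design decision---using $k$ dedicated individual-skip letters $\sigma_i$ and routing all failure transitions through the counter---is what keeps the transition digraph planar. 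Once this embedding is verified, $p$-$Synch[P]$ is $NWL$-hard, and $p$-$Synch$ inherits $NWL$-hardness, completing the proof.
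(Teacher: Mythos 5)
Your proposal is a genuinely different route from the paper's. The membership half coincides with the paper (a nondeterministic RAM holding the $k$ current images plus $O(1)$ scratch registers). For hardness, however, the paper does \emph{not} build a bespoke planar gadget for $p$-$LCS$: it first reduces $p$-$LCS$ to $p$-$Synch$ via Baeza-Yates subsequence automata, then passes to binary alphabets, and finally planarizes an \emph{arbitrary} binary automaton by subdividing each edge into $O(|Q|)$ segments, replacing every crossing by a local gadget over the alphabet $\{a,b\}\times\{0,1\}$, and attaching a clock $\mathcal{C}_m$ that forces any short synchronizing word to lie in the range of a homomorphism (so tokens cannot cheat at crossing gadgets); this yields an fpt \emph{Turing} reduction. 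Your single-shot many-one reduction, if repaired, is cleaner and stronger (it avoids the quadratic planarization blow-up and the Turing-reduction over all states $p$), at the price of losing the paper's side remark that hardness already holds over a four-letter alphabet, since your alphabet $\Sigma\cup\{\sigma_1,\dots,\sigma_k,\sharp\}$ grows with the input. Your counting argument itself is sound: the counter forces exactly $m$ letters of $\Sigma$ before the first $\sharp$, each token's $\Sigma$-advances trace a common subsequence of $v$ and $w_i$, and the telescoping over leftmost embeddings gives achievability of $r=\sum_i|w_i|+1-(k-1)m$ exactly when a length-$m$ common subsequence exists. The planarity of the spider-plus-fan digraph is also fine.

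There is, however, one genuine gap: the automaton you construct is not a legal instance of $p$-$Synch[P]$ as the paper defines it. The input promise requires $\mathcal{M}$ to be a \emph{synchronizing} planar automaton, and your $\mathcal{M}$ has two distinct absorbing states $s$ and $f$; an automaton with two sinks admits no reset word at all. Worse, your correctness argument leans on $f$ being a dead sink (that is what makes a premature or missing $\sharp$, or an $(m{+}1)$-st $\Sigma$-letter, fatal for the counter token), so you cannot simply merge $f$ into $s$. The standard repair is to replace $f$ by a one-way delay path of length greater than $r$ whose end feeds into $s$ (all letters advance along it); this keeps the digraph planar, makes the whole automaton synchronizing, and preserves the analysis because any trajectory entering the failure region needs more than $r$ further steps to reach $s$, so the minimum over legal words is unchanged at threshold $r$. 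Until some such patch is made, the reduction outputs objects outside the promise of the target problem and the hardness claim does not follow as written.
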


\begin{proof}
First, we check that $p$-$Synch$ belongs to $NWL$. To this end, we construct
a suitable nondeterministic RAM accepting the problem $p$-$Synch.$ The
machine works, on input, $\left( \mathcal{M},\left\{ q_{1},...,q_{k}\right\}
,l,k\right) ,$ as follows:

The machine stores in the first $k$ registers a tuple of positive integers $%
\left( s_{1},...,s_{k}\right) $, such that for all $i\leq k$ the inequality $%
s_{i}\leq \left\vert Q\right\vert $ holds. It begins with $\left(
0,...,0\right) $, and then it overwrites $\left( q_{1},...,q_{k}\right) $.
Set $\left( s_{1}^{1},...,s_{k}^{1}\right) =\left( q_{1},...,q_{k}\right) ,$
for all $i\leq l$ the machine nondeterministically chooses a tuple $\left(
s_{1}^{i+1},...,s_{k}^{i+1}\right) $, which can (over)writes on the first $k$
registers, if and only if, there exists $a\in \Sigma $ such that $\delta
\left( a,s_{j}^{i}\right) =s_{j}^{i+1}$. The machine accepts if and only if
the entries of the last tuple are all equal.

Now, we prove the $NWL$ hardness of $p$-$Synch\left[ P\right] $ and $p$-$%
Synch$. First, we prove that $p$-$LCS$ is $fpt$ many-one reducible to $p$-$%
Synch$, and hence we prove that $p$-$Synch$ is $fpt$ Turing reducible to $p$-%
$Synch\left[ P\right] .$ The later reduction is given as the composition of
two reductions. The first one is a $fpt$ many-one reduction of the problem $%
p $-$Synch$ in the problem $p$-$Synch\left[ 2\right] $, which is the
restriction of $p$-$Synch$ to binary automata (automata whose input alphabet
has size 2)$.$ The second one is a $fpt$ many-one reduction of $p$-$Synch%
\left[ 2\right] $ in $p$-$Synch\left[ P\right] .$

\textbf{First stage }(\textit{Reducing }$p$-$LCS$ \textit{to }$p$-$Synch$)%
\textbf{. }

Let $X=\left( \left\{ w_{1},...,w_{k}\right\} ,\Sigma ,m\right) $ be an
instance of $p$-$LCS$. Given $i\leq k,$ we use Baeza-Yates construction\
(see \cite{BY}) to compute a DFA, say $\mathcal{M}_{i}$, that accepts the
language constituted by all the subsequences of $w_{i}.$ It is important to
remark that the size of $\mathcal{M}_{i}$ is bounded above by $\left\vert
w_{i}\right\vert +1$.

Notice that for all $i\leq k,$ we are using the automaton $\mathcal{M}_{i}$
as a language acceptor, it implies that for all $i\leq k$, there exists a
marked state (the initial state of $\mathcal{M}_{i}$) which we denote with
the symbol $q_{0}^{i}.$ Moreover, for all $i\leq k,$ there exists a nonempty
subset of $Q_{i}$, denoted with the symbol $A_{i}$, and which is equal to
the set of accepting states of automaton $\mathcal{M}_{i}$.

We use the set $\left\{ \mathcal{M}_{i}:i\leq k\right\} $ to define an
automaton $\mathcal{M}=\left( \Omega ,Q,\delta \right) $ in the following
way:

\begin{enumerate}
\item $\Omega =\Sigma \cup \left\{ d\right\} $, where $d\notin \Sigma .$

\item $Q=\left( \bigsqcup\limits_{i\leq k}Q_{i}\right) \sqcup \left\{
q,p_{1},...,p_{m+1}\right\} $, where $\sqcup $ denotes disjoint union, and
given $i\leq k,$ the symbol $Q_{i}$ denotes the set of states of the
automaton $\mathcal{M}_{i}.$ Moreover, we have that $q,p_{1},...,p_{m+1}%
\notin \bigsqcup\limits_{i\leq k}Q_{i}$.

\item The transition function of $\mathcal{M}$, which we denote with the
symbol $\delta ,$ is defined as follows%
\begin{equation*}
\delta \left( p,a\right) =\left\{ 
\begin{array}{c}
\delta _{i}\left( p,a\right) \text{, if }p\in Q_{i}\text{ and }a\neq d \\ 
q\text{, if }p\in \bigsqcup\limits_{i\leq k}A_{i}\text{ and }a=d \\ 
p_{1}\text{, if }p\in \left( Q_{i}\backslash A_{i}\right) \text{ and }a=d \\ 
q\text{, if }p=q \\ 
p_{j+1},\text{ if }p=p_{j}\text{, }j<m+1\text{ and }a\in \Sigma \\ 
p_{1}\text{, if }p=p_{j}\text{, }j<m+1\text{, and }a=d \\ 
q,\text{ if }p=p_{m+1}\text{ and }a=d \\ 
p_{1},\text{ if }p=p_{m+1}\text{ and }a\neq d%
\end{array}%
\right.
\end{equation*}
\end{enumerate}

Let $Y\left( X\right) $ be equal to $\left( \mathcal{M},\left\{
q_{0}^{1},...,q_{0}^{k},p_{1}\right\} ,k+1,m+1\right) $, we have that $%
Y\left( X\right) $ it is the output of the first reduction. It is easy to
check that $X\in p$-$LCS$, if and only if, the states $%
q_{0}^{1},...,q_{0}^{k},p_{1}$ can be synchronized in time $m+1,$ that is:
It can be easily checked that $X\in p$-$LCS,$ if and only if, $Y\left(
X\right) \in p$-$Synch$.

Unfortunately, it happens that Baeza-Yates construction is nonplanar, and
hence if $Y\left( X\right) $ is equal to $\left( \mathcal{M},\left\{
q_{0}^{1},...,q_{0}^{k},p_{1}\right\} ,k+1,m+1\right) ,$ it could occur that
the automaton $\mathcal{M}$ is a nonplanar one. Therefore, we have to
proceed with the second reduction.\\

\textbf{Second stage }(\textit{Reducing }$p$-$Synch$ \textit{to }$p$-$Synch%
\left[ P\right] $)

Let $p$-$Synch\left[ 2\right] $ be the restriction of $p$-$Synch$ to the set
of instances%
\begin{equation*}
\left\{ \left( \mathcal{M},\left\{ q_{1},...,q_{k}\right\} ,l,k\right) :%
\mathcal{M}\text{ is binary synchronizing automaton}\right\} .
\end{equation*}

The construction used in \cite{B2} yields a $fpt$ many-one reduction of the
problem $p$-$Synch$ in its restriction $p$-$Synch\left[ 2\right] $. Now, we
will exhibit a $fpt$ Turing reduction of the problem $p$-$Synch\left[ 2%
\right] $ in the problem $p$-$Synch\left[ P\right] $.

Let $\left( \mathcal{M},\left\{ q_{1},...,q_{k}\right\} ,m\right) $ be an
instance of $p$-$Synch\left[ 2\right] .$ A planar drawing of the automaton $%
\mathcal{M}$ is an embedding in $\mathbb{R}^{2}$ of its transition digraph,
and which satisfies the following three properties:

\begin{itemize}
\item Edges are mapped on simple curves.

\item No three edges meet at a common crossing.

\item Two edges meet at most once.
\end{itemize}

Planar drawings can be computed in polynomial time, and if the automaton $%
\mathcal{M}$ is a planar one hence its planar drawing can be chosen to be a
planar embedding. Suppose that $\mathcal{M}$ is nonplanar, and let $\rho $
be a planar drawing of $\mathcal{M}$. Given $e$ an edge (transition) of $%
\mathcal{M}$, we use the symbol $cr_{\rho }\left( e\right) $ to denote the
number of crossings involving edge $e.$ Notice that for all $\rho $ and for
all $e$ the inequality $cr_{\rho }\left( e\right) \leq 2\left\vert
Q\right\vert $ holds.

To begin with the reduction we compute a planar drawing of $\mathcal{M}$,
say $\rho $, and we use $\rho $ to compute a planar automaton $\mathcal{N}$.
The computation of $\mathcal{N}$ goes as follows:

\begin{enumerate}
\item Let $\left\{ a,b\right\} $ be the input alphabet of $\mathcal{M}$, the
input alphabet of $\mathcal{N}$ is equal to $\left\{ a,b\right\} \times
\left\{ 0,1\right\} .$

\item Let $e$ be an edge of $\mathcal{M}$, and suppose that $e$ is labeled
with the letter $a$. We partition $\rho \left( e\right) $ into $2\left\vert
Q\right\vert $ disjoint segments. The idea is to built $2\left\vert
Q\right\vert $\ gadgets that are used to eliminate the crossings involving $%
e $. The segments can be chosen to be connected, with a nonempty interior,
and such that each crossing is an inner point of one of those intervals.
Moreover, we can choose the $2\left\vert Q\right\vert $ segments in such a
way that each one of them contains at most one crossing. The gadgets are
extremely simple:

Suppose that $e$ is directed from $p$ to $q.$ We observe that each one of
the $2\left\vert Q\right\vert $ segments has a first point (the closest to $%
\rho \left( p\right) $)$.$ Given $1\leq i\leq 2\left\vert Q\right\vert $, we
choose four points in the $i$th segment$.$ Let $%
v_{1}^{e,i},v_{2}^{e,i},v_{3}^{e,i}$ and $v_{4}^{e,i}$ be those four points,
we have that $v_{1}^{e,i}$ is equal to the first point of the segment, point 
$v_{2}^{e,i}$ lies between $v_{1}^{e,i}$ and $v_{3}^{e,i}$, while $%
v_{3}^{e,i}$ lies between $v_{2}^{e,i}$ and $v_{4}^{e,i}$. Moreover, the
point $v_{4}^{e,i}$ is different to the first point of the $\left(
i+1\right) $th segment. If $i=1,$ we have that $v_{1}^{e,i}=\rho \left(
p\right) $. If $i=2\left\vert Q\right\vert ,$ we set $v_{1}^{e,i+1}=\rho
\left( q\right) $. Let $i\leq 2\left\vert Q\right\vert $, notice that the $i$%
th segment has been divided into four subsegments $%
e_{1}^{i},e_{2}^{i},e_{3}^{i}$ and $e_{4}^{i}$. Given $j\leq 3$, the edge $%
e_{j}^{i}$ is directed from $v_{j}^{e,i}$ to $v_{j+1}^{e,i}$, while the edge 
$e_{4}^{i},$ is directed from $v_{4}^{e,i}$ to $v_{1}^{e,i+1}$. Moreover, we
assign to those four edges the labels $\left( a,0\right) ,\left( a,1\right)
,\left( a,1\right) $ and $\left( a,0\right) $ (respectively).

\item Now suppose that edges $e$ and $f$ meet at some point $x$. There
exists $i,j\leq 2\left\vert Q\right\vert $ such that $x$ lies on the $i$th
segment of $e,$ and it also lies on the $j$th segment of $f.$ We can choose
the points $v_{1}^{e,i},v_{2}^{e,i},v_{3}^{e,i}$ and $v_{4}^{e,i}$, and the
points $v_{1}^{f,j},v_{2}^{f,j},v_{3}^{f,j}$ and $v_{4}^{f,j}$ in such a way
that:

\begin{itemize}
\item The equalities $v_{3}^{e,i}=v_{2}^{f,j}$ and $v_{4}^{e,i}=v_{3}^{f,j}$
hold.

\item The point $x$ lies between $v_{3}^{e,i}$ and $v_{4}^{e,i}$
\end{itemize}

Notice that the construction is somewhat asymmetrical. However, it does not
matter: Given two edges that cross each other at some point $x,$ it makes
not difference which edge plays the role of edge $e$ and which one plays the
role of edge $f.$

Suppose that $f$ is labeled with the letter $b$, and suppose that $b\neq a$,
in this case the elimination of the crossing looks as follows
\begin{figure}[h]
\begin{center}
\includegraphics[scale =0.8]{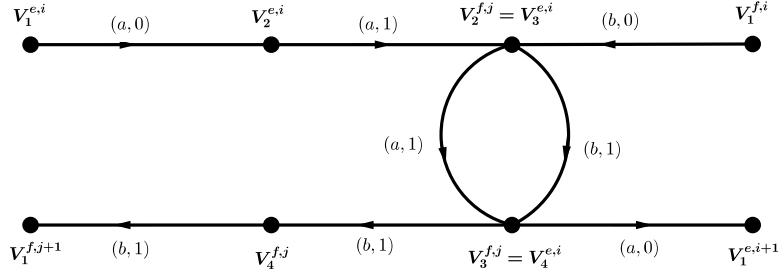}

\end{center}
\caption{Elimination of the crossing }
\label{figura2:Figure 2}
\end{figure}



If $b=a,$ we merge the two parallel edges going from $v_{3}^{e,i}$ ($%
v_{2}^{f,j}$) to $v_{4}^{e,i}$ ($v_{3}^{f,j}$), and which are labeled with
the letter $\left( a,1\right) .$

\item We are trying to draw a planar automaton $\mathcal{N}$. To this end,
we use the set of points%
\begin{equation*}
P=\left\{ v_{j}^{e,i}:i\leq 2\left\vert Q\right\vert ,\text{ }j=1,2,3,4\text{
and }e\text{ is an edge of }\mathcal{M}\right\} ,
\end{equation*}%
and the set of edges%
\begin{equation*}
E=\left\{ e_{j}^{i}:i\leq 2\left\vert Q\right\vert ,\text{ }j=1,2,3,4\text{
and }e\text{ is an edge of }\mathcal{M}\right\} ,
\end{equation*}%
and we add some loops: If $v\in P$, and there is not an outgoing edge
labeled with the letter $x\in \Sigma \times \left\{ 0,1\right\} ,$ we add a
loop with this label attached to this node.
\end{enumerate}

It is easy to check that $\mathcal{N}$ is a planar synchronizing automaton,
which can be computed in polynomial time from automaton $\mathcal{M}$.
Notice that the construction of $\mathcal{N}$ does not depend on the pair $%
\left( \left\{ q_{1},...,q_{k}\right\} ,m\right) .$

Let $m\geq 1$ and let $p$ be an state of $\mathcal{M}$, we use the symbol $%
\mathcal{N}_{m,p}$ to denote the planar automaton that is obtained from $%
\mathcal{N}$ by attaching to node $\rho \left( p\right) $ a planar digraph
that we call $\mathcal{C}_{m}.$ The gadget $\mathcal{C}_{m}$ is computed
from $m$ and $\left\vert Q\right\vert $, and it is used as a clock for the
synchronization process. The construction of $\mathcal{C}_{m}$ goes as
follows:

We begin with a set $W_{m},$ which is equal to 
\begin{equation*}
\left\{ w_{,j,\epsilon }^{i,k}:i\leq 2\left\vert Q\right\vert ,\text{ }k\leq
m,\text{ }j=1,2,3,4\text{, and }\epsilon =0,1\right\} \bigcup \left\{
w\left( p\right) ,\phi \left( p\right) \right\} .
\end{equation*}

Then, for all $k\leq m$ we identify the points $w_{,j,0}^{1,k}$ and $%
w_{,j,1}^{1,k}.$ Given $i\leq 2\left\vert Q\right\vert $ and given $\epsilon
\in \left\{ a,b\right\} ,$ we add the edges $\left( w_{1,\epsilon
}^{i,k},w_{2,\epsilon }^{i,k}\right) $ and $\left( w_{4,\epsilon
}^{i,k},w_{1,\epsilon }^{i+1,k}\right) $ and we label them with the letter $%
\left( \epsilon ,0\right) $. We also add the edges $\left( w_{2,\epsilon
}^{i,k},w_{3,\epsilon }^{i,k}\right) $ and $\left( w_{3,\epsilon
}^{i,k},w_{4,\epsilon }^{i,k}\right) $, and we label them with the letter $%
\left( \epsilon ,1\right) .$ We set $w_{1,0}^{1,m+1}=w_{1,1}^{1,m+1}=\phi
\left( p\right) .$ If $i=2\left\vert Q\right\vert $, we add the edges $%
\left( w_{4,\epsilon }^{i,k},w_{1,\epsilon }^{1,k+1}\right) $ and we label
them with the letter $\left( \epsilon ,0\right) .$ Moreover, for all $%
\epsilon \in \left\{ a,b\right\} $ we add an edge $\left( w\left( p\right)
,w_{1,\epsilon }^{1,1}\right) $ labeled with the letter $\left( \epsilon
,0\right) .$ Finally, we add the necessary loops in order to get a planar
automaton, this planar automaton (its transition digraph) is the clock $%
\mathcal{C}_{m}$.

Given the node $\rho \left( p\right) $, we embed $\mathcal{C}_{m}$ in the
plane in such a way that the following two conditions are satisfied:

\begin{enumerate}
\item $\phi \left( p\right) =\rho \left( p\right) .$

\item There are not crossings between the edges of $\mathcal{C}_{m}$ and the
edges of $\mathcal{N}$.
\end{enumerate}

Then, we remove the loops that were attached to $\phi \left( p\right) $ in
order to get a deterministic planar synchronizing automaton that we denote
with the symbol $\mathcal{N}_{m,p}$.

We observe that if a token is placed on state $w\left( p\right) ,$ then one
can move this token to the state $\rho \left( p\right) $ by using a string
of length $8m\left\vert Q\right\vert .$ The quantity $8m\left\vert
Q\right\vert $ is the length of the shortest strings satisfying the equality 
\begin{equation*}
\widehat{\delta _{\mathcal{N}_{m,p}}}\left( w\left( p\right) ,X\right) =\rho
\left( p\right) .
\end{equation*}

Moreover, if the string $X\in \left( \left\{ a,b\right\} \times \left\{
0,1\right\} \right) ^{8m\left\vert Q\right\vert }$ satisfies the above
equality, there exist $X_{1},...,X_{m}$ such that $X=X_{1}\cdot \cdots \cdot
X_{m}$ and for all $i\leq m$ the factor $X_{i}$ satisfies the equality%
\begin{equation*}
X_{i}=\left( \left( \epsilon _{i},0\right) \left( \epsilon _{i},1\right)
\left( \epsilon _{i},1\right) \left( \epsilon _{i},0\right) \right)
^{2\left\vert Q\right\vert },
\end{equation*}%
for some $\epsilon _{i}\in \left\{ a,b\right\} .$ That is, given $f:\left\{
0,1\right\} ^{\ast }\rightarrow \left( \left( \left\{ 0,1\right\} \times
\left\{ 0,1\right\} \right) \right) ^{\ast }$, the homomorphism defined by%
\begin{equation*}
f\left( a_{1}...a_{k}\right) =\left( \left( a_{1},0\right) \left(
a_{1},1\right) \left( a_{1},1\right) \left( a_{1},0\right) \right)
^{2\left\vert Q\right\vert }....\left( \left( a_{k},0\right) \left(
a_{k},1\right) \left( a_{k},1\right) \left( a_{k},0\right) \right)
^{2\left\vert Q\right\vert },
\end{equation*}%
it happens that $X$ is a minimal string satisfying the equality%
\begin{equation*}
\widehat{\delta _{\mathcal{N}_{m,p}}}\left( w\left( p\right) ,X\right) =\rho
\left( p\right) ,
\end{equation*}%
if and only if, there exists $W_{X}\in \left\{ 0,1\right\} ^{m}$ such that $%
X=f\left( W_{X}\right) .$ This property of $\mathcal{C}_{m}$ allows us to
use it as a clock: If one wants to synchronize the states $%
p_{1},...,p_{k},w\left( p\right) $ in less than $8m\left\vert Q\right\vert
+1 $ steps, then he must try to move all those states to $\rho \left(
p\right) $, and to this end he has to use a string in the range of the
homomorphism $f$. We use this fact to avoid that some token being
synchronized uses the crossing-gadgets to leave an edge of $\mathcal{M}$
that it has not fully traversed, recall that the edges of $\mathcal{M}$ were
partitioned in many different segments (subedges).

Given the automaton $\mathcal{M}$ and given $S=\left\{
q_{1},...,q_{k}\right\} $, we use the symbol $\mathcal{I}_{p,m,S}$ to denote
the tuple%
\begin{equation*}
\left( \mathcal{N}_{m,p},\left\{ \rho \left( q_{1}\right) ,...,\rho \left(
q_{k}\right) ,w\left( p\right) \right\} ,8m\left\vert Q\right\vert \right) .
\end{equation*}

It is easy to check that the set $S$ can be synchronized in time $m$, if and
only if, there exists $p$ such that the states $\rho \left( q_{1}\right)
,...,\rho \left( q_{k}\right) ,w\left( p\right) $ of the automaton $\mathcal{%
N}_{m,p}$ can be synchronized in time $8m\left\vert Q\right\vert $. Thus, we
have the claimed $fpt$ Turing reduction of $p$-$Synch\left[ 2\right] $ in $p$%
-$Synch\left[ P\right] ,$ and hence we have that $p$-$Synch\left[ P\right] $
is $NWL$ hard.

It is worth to remark that the above reduction shows that $p$-$Synch\left[
P,4\right] $ is $NWL$ hard. We use the symbol $p$-$Synch\left[ P,4\right] $
to denote the restriction of $p$-$Synch\left[ P\right] $ to the class of
automata defined over a four letter alphabet.
\end{proof}

Given $L\left[ P\right] $, one of the algorithmic problems studied so far,
we use the symbol $L$ to denote its unrestricted (nonplanar) version, i.e.
symbol $L$ denotes the algorithmic problem that is obtained from $L\left[ P%
\right] $ by flipping the planarity constraint. Let us summarize all the
above results with the following table

\begin{center}
\bigskip 
\begin{equation*}
\begin{tabular}{|l|l|l|l|}
\hline
$Synch\left[ P\right] $ & $Esynch\left[ P\right] $ & Approx. ratio of $Synch%
\left[ P\right] $ & $\ \ p$-$Synch\left[ P\right] $ \\ \hline
NP complete & DP complete & \ \ \ \ \ \ \ \ \ \ $\ \ \ \ O\left( n\right) $
& $NWL$ complete \\ \hline
$Synch$ & $Esynch$ & \ Approx. ratio of $Synch$ & $\ \ \ \ p$-$Synch$ \\ 
\hline
NP complete & DP complete & $\ \ \ \ \ \ \ \ \ \ \ \ \ \ O\left( n\right) $
& $NWL$ complete \\ \hline
\end{tabular}%
\end{equation*}

\bigskip
\end{center}

The above table seems to indicate that the class of planar automata is
universal with respect to the algorithmic complexity of synchronization.
Perhaps, the only issue that remains to be analyzed is the parameterized
approximability of subset synchronization.

We say that $p$-$Synch$ is \textit{fpt approximable within the ratio} $%
f\left( n,k\right) ,$ if and only if, there exists an $fpt$ algorithm,
which, on input $\left( \mathcal{M},\left\{ q_{1},...,q_{k}\right\}
,k\right) $, outputs an integer $t$ such that if $rl_{\mathcal{M}}\left(
q_{1},...,q_{k}\right) $ is the minimal reset length of the states $%
q_{1},...,q_{k},$ then the inequalities%
\begin{equation*}
rst_{\mathcal{M}}\left( q_{1},...,q_{k}\right) \leq t\leq f\left( n,k\right)
\cdot rst_{\mathcal{M}}\left( q_{1},...,q_{k}\right)
\end{equation*}%
hold.

It follows from the work of Gerbush and Heeringa that $p$-$Synch$ is fpt
approximable within the ratio $\left\lceil \frac{n-1}{k-1}\right\rceil $
(see \cite{GH}). It is natural to ask: Which are the approximation ratios
that can achieved in fpt time? Which are the approximation ratios that can
achieved in fpt time for planar automata? We think that those two questions
are the questions that remain to be solved, and that are related to the
algorithmic complexity of synchronizing planar and nonplanar automata.

We observe that subset synchronization makes sense for nonsynchronizing
automata. It is easy to check that for all $k\geq 2$ the synchronizing times
of the hardest $k$-tuples of states is $\Omega \left( n^{k}\right) $ (see 
\cite{V}). Synchronizing times of order $\Omega \left( n^{k}\right) $ are
achieved by sequences of nonsynchronizing automata (an upper bound $O\left(
k\cdot n^{2}\right) $ holds for synchronizing automata). It is also easy to
check that for all $k\geq 2$ there exist sequences of planar automata
achieving those worst synchronizing times of order $\Omega \left(
n^{k}\right) $. Thus, we have that the slowest nonsynchronizing automata are
planar. We conjecture that an analogous fact holds for the synchronization
of all the states (whenever it is possible): The slowest synchronizing
automata are planar automata.

\section{Concluding remarks: Synchronizing times and The \v{C}erny
Conjecture for planar automata}

The hardness of a class of synchronizing automata can be measured in many
different ways, we propose two different hardness measures:

\begin{itemize}
\item The computational hardness of the algorithmic problems (restrictions)
that are determined by the class.

\item The synchronizing times required by the automata within the class.
\end{itemize}

According to the first measure, the class of planar synchronizing automata
is as hard as the class constituted by all the synchronizing automata. We
conjecture that the same is true for the second measure.

\textit{The weak \v{C}erny conjecture} is the conjecture claiming that there
exists a quadratic polynomial $q\left( X\right) $ such that the
synchronizing time of any synchronizing automata with $n$ states is
upperbounded by $q\left( n\right) $. \v{C}erny's conjecture claims that $%
q\left( n\right) $ can be taken equal to $\left( n-1\right) ^{2}.$

\begin{conjecture}
\label{conjecture-universal}Given $\varepsilon >0$, if there exists a
sequence of synchronizing automata whose synchronizing time is $\Omega
\left( n^{2+\varepsilon }\right) $, then there must exist a sequence of
planar synchronizing automata whose synchronizing time is $\Omega \left(
n^{2+\varepsilon }\right) $.
\end{conjecture}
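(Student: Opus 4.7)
The plan is to adapt the crossing-elimination strategy from the proof of Theorem \ref{teo} to the setting of slow synchronization. Let $\{\mathcal{M}_n\}$ be a hypothetical sequence with $|Q_{\mathcal{M}_n}|=n$ and $rl_{\mathcal{M}_n}=\Omega(n^{2+\varepsilon})$. First, I would fix for each $\mathcal{M}_n$ a planar drawing $\rho_n$ with as few crossings as possible, call this number $c_n$. Next, I would apply the four-point gadget of Theorem \ref{teo} at every crossing, subdividing each edge $e$ into $2n$ segments whose four inner edges carry the alternating labels $(a_e,0),(a_e,1),(a_e,1),(a_e,0)$. The letter-splitting forces any synchronizing word of the resulting planar automaton $\mathcal{N}_n$ to lie in the image of the homomorphism $f$ introduced in that proof, and hence to project onto a synchronizing word of $\mathcal{M}_n$ of the corresponding length; this should yield $rl_{\mathcal{N}_n}\geq 8n\cdot rl_{\mathcal{M}_n}=\Omega(n^{3+\varepsilon})$.

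The second step is to bound $m=|Q_{\mathcal{N}_n}|$ and convert the above inequality into a bound in the parameter $m$. With $2n$ segments per edge and $O(n)$ edges in $\mathcal{M}_n$, the gadgetization gives $m=O(n^2)$ in the worst case, and the resulting bound $rl_{\mathcal{N}_n}=\Omega(m^{(3+\varepsilon)/2})=\Omega(m^{1.5+\varepsilon/2})$ falls short of the target $\Omega(m^{2+\varepsilon})$. Hence the naive planarization does not suffice, and the heart of the problem lies in the trade-off between the size blow-up induced by any gadget and the reset-length amplification it produces.

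This trade-off is the main obstacle, and the reason the statement stands as a conjecture rather than a theorem. I see two plausible routes around it. The first is a purely combinatorial statement: any synchronizing automaton with reset length $\Omega(n^{2+\varepsilon})$ admits a drawing with only $O(n)$ crossings, so that the gadgetization above already produces a planar automaton on $O(n)$ states with the desired reset length. This would formalize the informal principle, stated in the introduction, that all slow synchronizing sequences are \emph{essentially planar}. The second is to redesign the crossing gadget so that the $2n$ segments per edge are shared across edges, yielding a planarization of size $O(n+c_n)$ rather than $O(n\cdot c_n)$, while still enforcing the letter-splitting homomorphism that ties reset words of $\mathcal{N}_n$ to reset words of $\mathcal{M}_n$. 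Either route demands genuinely new ideas beyond the techniques already deployed for Theorem \ref{teo}, and I expect the first to be the deeper and more revealing of the two, since it asserts a topological obstruction to extreme slow synchronization.
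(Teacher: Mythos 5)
The statement you were given is stated in the paper as a \emph{conjecture}, not a theorem: the authors offer no proof, and they explicitly list ``Prove conjecture \ref{conjecture-universal}'' as an open problem in their concluding section. Your proposal, which candidly stops short of a proof, is therefore not missing anything the paper itself supplies. Indeed, your obstruction analysis coincides with the authors' own discussion almost line for line: they observe that applying the construction of Theorem \ref{teo} to a slowly synchronizing $\mathcal{M}$ yields a planar $\mathcal{N}$ with $rl_{\mathcal{N}}\geq 8n\cdot rl_{\mathcal{M}}$, but that the size of $\mathcal{N}$ is quadratic in the size of $\mathcal{M}$ and that ``our construction does not work because of this quadratic blow-up.'' Your computation that $\Omega\left( n^{3+\varepsilon }\right)$ measured against $m=O\left( n^{2}\right)$ states gives only $\Omega\left( m^{3/2+\varepsilon /2}\right)$ is exactly the quantitative form of that remark.

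Where you and the authors diverge is only in the speculative escape routes, none of which is carried out on either side. The paper proposes restricting the construction to sequences of bounded genus, where the clocks and gadgets could be made linear in size; you propose either a combinatorial claim that reset length $\Omega\left( n^{2+\varepsilon }\right)$ forces crossing number $O\left( n\right)$, or a redesigned gadget shared across edges giving size $O\left( n+c_{n}\right)$. Your first route is essentially a precise rendering of the paper's informal thesis that all slowly synchronizing sequences are ``essentially planar,'' so the two discussions are morally the same. One caution worth recording: the inequality $rl_{\mathcal{N}}\geq 8n\cdot rl_{\mathcal{M}}$ that both you and the paper invoke is itself unproved. The argument of Theorem \ref{teo} establishes the clock/homomorphism mechanism only for \emph{subset} synchronization of designated states; for the conjecture one must synchronize \emph{all} states of $\mathcal{N}$, including the many new gadget states, and since the added loops allow a token to idle rather than follow the image of the homomorphism $f$, the claim that every reset word of $\mathcal{N}$ projects onto a reset word of $\mathcal{M}$ of proportional length needs an argument that neither you nor the paper gives. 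So even granting a resolution of the size blow-up, a complete proof would still require closing that step.
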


We notice that our conjecture implies that The weak \v{C}erny conjecture is
true, if and only if, it holds true for planar synchronizing automata. In
order to prove the above conjecture one can try a construction similar to
the used in the proof of theorem \ref{teo}.

Let $\mathcal{M}$ be a given nonplanar automaton, and let $\mathcal{N}$ be
the output of the aforementioned construction. We have that $rl_{\mathcal{N}%
}\geq 8n\cdot rl_{\mathcal{M}}$, where $n$ is the size of $\mathcal{M}$.
Thus, the reset length of $\mathcal{N}$ is large provided that the reset
length of $\mathcal{M}$ is large. The problem is that the size of $\mathcal{N%
}$ is quadratic with respect to the size of $\mathcal{M}$. Our construction
does not work because of this quadratic blow-up. It could work if we could
restrict its application to sequences of automata of bounded genus. In this
later case we would have to use smaller clocks, clocks whose sizes are
linearly related to the sizes of the automata given as input.

We notice that from a naive point of view our conjecture must be true:
Planarity is a constraint that makes harder the movement of tokens trough
the digraph. This naive observation was an additional motivation for our
conjecture, which asserts that the worst synchronizing times are achieved by
planar automata. We have some further reasons to consider that it is a
likely conjecture:

\begin{itemize}
\item As remarked before, all the sequences of slowly synchronizing automata
registered in the literature are sequences of planar automata. Thus, the
slowest synchronizing automata registered in the literature are planar
automata.

\item As it was remarked at the end of last section, the slowest
nonsynchronizing automata are planar.
\end{itemize}

We would like to finish this work by proposing two problems:

\begin{itemize}
\item \textbf{Problem 1}: Prove conjecture \ref{conjecture-universal}.

\item \textbf{Problem 2}: Prove The weak \v{C}erny Conjecture for planar
automata.
\end{itemize}

It should be clear that positive solutions to both problems entail a proof
of The weak \v{C}erny Conjecture for general synchronizing automata.

\textbf{Acknowledgement. }The second author would like to thank the support
provided by Universidad Nacional de Colombia through the project Hermes 8943
(32083).


\begin{thebibliography}{99}
\bibitem{AGV} D. Ananichev, V. Gusev, M. Volkov. Slowly Synchronizing
Automata and Digraphs. MFCS 2010, pages 55-65.

\bibitem{BY} R. Baeza-Yates. Searching Subsequences. Theor. Comput. Sci.
78(2): 363-376, 1991.

\bibitem{B2} M. Berlinkov. On Two Algorithmic Problems about Synchronizing
Automata. DLT 2014, pages 61-67.

\bibitem{Book} R. Book, A. Chandra. Inherently Nonplanar Automata. Acta Inf.
6: 89-94, 1976.

\bibitem{C} J. \v{C}erny. Pozn\'{a}mka k. homog\'{e}nnym experimentom s
konecnymi automatmi, Mat. fyz. cas SAV 14: 208--215, 1964.

\bibitem{E} D. Eppstein. Reset sequences for monotonic automata, SIAM J.
Comput. 19: 500--510, 1990.

\bibitem{FG} J. Flum, M. Grohe. \textit{Parameterized Complexity Theory. }%
Springer Verlag, Berlin, 2006.

\bibitem{GS} P. Gawrychowski, D. Straszak. Strong Inapproximability of the
Shortest Reset Word. MFCS 2015, pages 243-255.

\bibitem{GH} M. Gerbush, B. Heeringa. Approximating Minimum Reset Sequences.
CIAA 2010, pages 154-162.

\bibitem{G} S. Guillemot. Parameterized complexity and approximability of
the Longest Compatible Sequence problem. Discrete Optimization 8(1): 50-60,
2011.

\bibitem{OU} J. Olschewski, M. Ummels. The Complexity of Finding Reset Words
in Finite Automata. MFCS 2010, pages 568-579.

\bibitem{V} V. Vorel. Subset Synchronization of Transitive Automata. AFL
2014, pages 370-381.
\end{thebibliography}
\end{document}